\pdfoutput=1
\documentclass[11pt,twocolumn]{article}
\ifdefined\pdfgentounicode
  \input{glyphtounicode}
\fi
\usepackage[T1]{fontenc}
\usepackage[utf8]{inputenc}
\usepackage{amsmath,amssymb,amsthm}
\usepackage{newtxtext}
\usepackage{geometry}
\usepackage{microtype,graphicx,xcolor,booktabs,array,tabularx}
\usepackage{algorithm,algpseudocode,listings}
\usepackage{etoolbox}
\usepackage[nospread]{flushend}
\usepackage{placeins,flafter}
\usepackage[varqu,scaled=0.95]{zi4}
\usepackage{tikz}
\usetikzlibrary{arrows.meta,positioning}
\usepackage{caption}
\definecolor{linkcol}{HTML}{af1834}
\definecolor{codebar}{HTML}{4063D8}
\definecolor{codebg}{HTML}{F6F7F9}
\definecolor{jlkw}{HTML}{9558B2}
\definecolor{jlfun}{HTML}{1F5FAD}
\definecolor{jlcom}{HTML}{626A73}

\usepackage{cite}
\mathchardef\citepunctpenalty=10000 
\usepackage{hyperref}
\usepackage{xurl}
\hypersetup{colorlinks=true,linkcolor=linkcol,citecolor=linkcol,urlcolor=linkcol,
 pdftitle={QRecoupling.jl: numerical and exact computation of classical and quantum recoupling coefficients},
 pdfauthor={Seth K. Asante}}
\usepackage[nameinlink,capitalise]{cleveref}
\makeatletter
\providecommand{\theHALG@line}{}
\renewcommand{\theHALG@line}{\thealgorithm.\arabic{ALG@line}}
\makeatother
\theoremstyle{plain}
\newtheorem{proposition}{Proposition}
\theoremstyle{remark}
\newtheorem{remark}[proposition]{Remark}

\algnewcommand{\IfThen}[2]{\State\algorithmicif\ #1\ \algorithmicthen\ #2}
\newcommand{\q}{\quad}
\newcommand{\qq}{\qquad}
\newcommand{\Q}{\mathbb{Q}}
\newcommand{\Z}{\mathbb{Z}}
\newcommand{\sixj}[6]{\begin{Bmatrix}#1&#2&#3\\#4&#5&#6\end{Bmatrix}}

\DeclareRobustCommand{\code}[1]{{\ttfamily\bfseries\small\detokenize{#1}}}
\DeclareTextFontCommand{\texttt}{\ttfamily\bfseries}
\newenvironment{qrwidefigure}{\begin{figure*}[t]}{\end{figure*}}
\newenvironment{qrwidetable}{\begin{table*}[!t]}{\end{table*}}
\lstdefinelanguage{julia}{
 morekeywords={using,for,in,if,else,end,let,do,begin,function,return,true,false},
 morekeywords=[2]{q6j,q3j,qcg,qdim,fsymbol,fmatrix,radical,x_form,qeval,prove_identity},
 morekeywords=[3]{Exact,Level,At,Symbolic,ExactXSum,XValue,FactorialSum,BigFloat},
 sensitive=true,morecomment=[l]{\#},morestring=[b]"}
\BeforeBeginEnvironment{lstlisting}{\par\noindent\begin{minipage}{\linewidth}}
\AfterEndEnvironment{lstlisting}{\end{minipage}\par\medskip}
\title{QRecoupling.jl: numerical and exact computation\\of classical and quantum recoupling coefficients}
\author{Seth K. Asante\thanks{Email: \href{mailto:seth.asante@unb.ca}{seth.asante@unb.ca}}\\[4pt]
  \small Department of Mathematics and Statistics, University of New Brunswick\\
  \small Fredericton, NB E3B 5A3, Canada}
\date{}

\renewenvironment{abstract}
  {\par\small\noindent\textbf{\abstractname.}\ \ignorespaces}
  {\par}
\makeatletter
\apptocmd{\@maketitle}{\begin{center}\begin{minipage}{0.94\textwidth}
\begin{abstract}
QRecoupling.jl is a Julia package for numerical, exact and symbolic evaluation of classical and quantum
recoupling coefficients and generic finite sums of quantum factorials, with one interface across the classical limit, principal unitary levels and real or complex deformation. A compact factorial rule and its cyclotomic exponents expose factor cancellation, polynomial structure and level support. The package covers quantum Clebsch--Gordan coefficients, $3j$ and $6j$ symbols and user-defined sums. Its adaptive numerical evaluation is combined with exact decisions for sums that cancel to zero. Classical $3j$ and $6j$ symbols are evaluated at a speed comparable to that of WIGXJPF, which is based on prime factorization tables, at the smallest spins, and about 6 and 10 times faster at spin cutoff 500. Under real, root-of-unity and complex deformation, the largest relative error measured against independent high-precision references is $3.5\times10^{-14}$ for spins up to 400. Fusion matrices with 401 channels are computed about 100 times faster than by loops over their entries.
\end{abstract}\medskip
\small\textbf{Keywords:} quantum recoupling, Wigner symbols, quantum groups, exact arithmetic, Julia
\end{minipage}\end{center}\vspace{1em}}{}{\PackageError{qrecoupling}{Cannot extend title block}{}}
\makeatother

\begin{document}
\raggedbottom
\maketitle


\section*{Program summary}
\begingroup
\small
\setlength{\parindent}{0pt}
\setlength{\parskip}{3pt plus 1pt minus 1pt}

\textit{Program title:} \href{https://github.com/sethkasante/QRecoupling.jl}{QRecoupling.jl}, version 0.5.1.

\textit{Licensing provisions:} MIT.

\textit{Programming language:} Julia (1.10 or later).

\textit{External routines/libraries:} Nemo.jl, LRUCache.jl.

\textit{Nature of problem:} Direct evaluation of recoupling coefficients and related finite sums of factorial products can overflow or lose accuracy through cancellation. Roots of unity change the admissible labels and contributing terms, while numerical, exact and symbolic calculations require different arithmetic for the same coefficient.

\textit{Solution method:} A compact factorial rule supports classical values, unitary levels, real or complex deformation, and deferred symbolic expressions. Numerical kernels use scaling, compensated arithmetic and adaptive precision; three-term recurrences share work across recoupling families and fusion matrices. Exact values use integer arithmetic at $q=1$ and real cyclotomic coordinates with explicit radicals at a level. Nonzero modular residues rule out zero, while unresolved candidates receive exact confirmation within the numerical evaluator and the zero-query interface.

\textit{Restrictions:} Labels are integer or half-integer spins, and level values follow the level's fusion rules. Weighted coupling coefficients support numerical targets and exact classical output. Their exact-level and symbolic output is not yet available.

\par
\endgroup

\section{Introduction}\label{sec:intro}

Many quantities in the quantum theory of angular momentum and its deformations are finite sums of products and quotients of factorials with alternating signs. Examples are Clebsch--Gordan coefficients, $3j$ and $6j$ symbols, their quantum-group analogues, and terminating hypergeometric series in general. Such formulas are difficult to evaluate reliably and exactly. Their terms can exceed the sum by many orders of magnitude, so that floating-point addition may require high precision to resolve it. These terms can also cancel exactly (a \emph{cancellation zero}), which no precision can settle.\footnote{The quantum $6j$ symbol $\{1\,\tfrac52\,\tfrac52;\tfrac52\,3\,3\}_q$ at $q=e^{i\pi/12}$ vanishes through cancellation of its terms.} Quantum factorials at a root of unity add a third difficulty, because individual factors vanish there and common factors must be cancelled before the formula is specialized.

These sums are the local coupling and recoupling coefficients of many physical theories. For instance, in atomic and nuclear spectroscopy they combine angular momenta~\cite{edmonds}. In quantum gravity the $6j$ symbol is the amplitude of a quantum tetrahedron~\cite{pr}, the building block of transition amplitudes~\cite{rovelli,baezbarrett}, and its quantum-group deformation is used to describe a cosmological constant and curved discrete geometries~\cite{mizoguchi,majorsmolin,hhkr,dupuisgirelli,dgl,livine,hhp}. At a root of unity the coefficients become the associativity data of $\mathrm{SU}(2)_k$, which define the Turaev--Viro and Reshetikhin--Turaev topological quantum field theories~\cite{tv,kr,rt} and the $F$-symbols of anyon models and string-net condensates~\cite{kitaev,levinwen}. Spin foam computations~\cite{dona,gozzini} and tensor-network algorithms with $\mathrm{SU}(2)$ or anyonic symmetry~\cite{singhvidal,pfeifer,asante_tn} evaluate them in very large numbers. These applications need fast approximate values, and exact ones for coherence identities such as the pentagon equation, often within the same calculation.

Existing methods for the classical coefficients exploit different aspects of the factorial structure. WIGXJPF uses prime factorization and multiword integer arithmetic to evaluate classical Wigner symbols without loss from cancellation~\cite{wig}. WignerSymbols.jl\footnote{\texttt{WignerSymbols.jl}; version 2.0 \url{https://github.com/Jutho/WignerSymbols.jl}.} builds on this approach for exact classical output. Other classical approaches compute the coefficients exactly~\cite{wei} or store precomputed values in compact tables~\cite{raschyu}, and recurrence methods are especially useful for families of related values~\cite{kr,sg,lulu,tw}. For quantum-deformed coefficients there are far fewer methods, and the prime-factorization approach does not carry over directly, because quantum integers factor into cyclotomic polynomials rather than primes~\cite{dcr}, and some of these factors vanish at roots of unity. QWignerSymbols.jl\footnote{\texttt{QWignerSymbols.jl}; version 0.1. \url{https://github.com/QuantumKitHub/QWignerSymbols.jl}.} implements the factorial formulas by direct floating-point summation and integrates these data with TensorKit. Its tested version does not provide adaptive precision or dedicated exact, symbolic or unitary-level evaluation targets. No single tool therefore handles all three difficulties across the classical and deformed regimes.

QRecoupling.jl addresses the three difficulties through one interface and one set of conventions. It covers Clebsch--Gordan coefficients, $3j$ and $6j$ symbols, fusion and braiding data, and user-defined sums, in the classical limit, at principal unitary levels and at real or complex deformation. Each coefficient is stored as a compact \emph{factorial rule}, to which quantum coupling coefficients add powers of $q$. The $6j$ symbol is the main worked example, with numerical, exact and symbolic evaluation from the same rule.

The package builds on the deferred cyclotomic framework of Ref.~\cite{dcr}, which stores every term of a sum through its cyclotomic exponents. This work replaces these expanded exponents by the compact factorial rule. It adds adaptive numerical evaluation with error estimates, exact decisions for sums that cancel to zero, weighted coupling coefficients, and recurrences for families of coefficients. Together these address the three difficulties. Large cancellation is met by scaling, compensated summation and higher precision, with recurrences as a second route. When these cannot tell a small sum from a vanishing one, exact arithmetic decides it algebraically. At a root of unity, the vanishing factors are determined before any evaluation. A quantum factorial factors into cyclotomic polynomials, and only one of these is zero at a given root of unity. Integer division counts how often it occurs in the numerator and in the denominator of a term, which shows whether the term is zero, finite or singular, and so which terms contribute at that level. The same count for every cyclotomic factor shows exactly when a term is a polynomial in $x=q+q^{-1}$ with integer coefficients. An exact value at a level can also be written in nested square roots whenever the degree of its number field is a power of two.

The article is organized as follows. \Cref{sec:firstcalculation} previews the user interface. \Cref{sec:architecture} writes coupling, recoupling and fusion coefficients as finite factorial sums. \Cref{sec:multiplicative} introduces the factorial rule and its cyclotomic exponents, the main mathematical objects, which determine when terms are integer polynomials and which terms contribute at a level. \Cref{sec:numerics} treats adaptive numerical evaluation with its error estimates and zero decisions, and \cref{sec:exact} exact and symbolic computation. \Cref{sec:workflow} gives worked examples, \cref{sec:performance} reports accuracy and performance, and \cref{sec:limits} discusses scope and future work. The appendix derives the error estimate of the plain pass.

\section{A unified interface}\label{sec:firstcalculation}

\begin{table*}
\centering\small
\caption{The target interface for factorial rules. The weighted coupling functions \code{qcg} and \code{q3j} support the numerical targets and \code{Exact()}, but not yet \code{Exact(k)} or \code{Symbolic()}. An omitted deformation is classical.}\label{tab:targets}
\begin{tabularx}{\textwidth}{@{}l l X@{}}
\toprule
Request & Target & Output form \\
\midrule
Classical numerical & \code{Classical()} or omitted & \texttt{Float64} (at $q=1$)\\
Classical exact & \code{Exact()} & $\sigma\sqrt{r}$ with $\sigma=\pm1$ and $r$ rational, including rational values\\
Unitary level & \code{Level(k)} & \texttt{Float64} (at $q=e^{i\pi/(k+2)}$)\\
Exact level & \code{Exact(k)} & $P(x)\sqrt{R(x)}$ with $x=q+q^{-1}$, the radical factors kept explicit\\
Specified deformation & \code{At(q)} & \texttt{Float64} or \texttt{ComplexF64} \\
Generic expression & \code{Symbolic()} & Deferred factorial sum, expansion requested separately\\
\bottomrule
\end{tabularx}

\end{table*}

A calculation specifies integer or half-integer spin labels and an evaluation target (\cref{tab:targets}). The target changes the arithmetic but not the convention of the coefficient. The functions \code{qcg}, \code{q3j} and \code{q6j} evaluate Clebsch--Gordan coefficients, $3j$ symbols and $6j$ symbols:
\begin{lstlisting}
using QRecoupling
js = (1, 1, 1, 1, 1, 1)             # spin labels
c = q6j(js...)                      # classical
v = q6j(Exact(10), js...)           # exact level
s = q6j(Symbolic(), js...)          # deferred
cg = qcg(At(0.8), 1, 1, 1, -1, 1, 0)
w = q3j(At(0.8), 1, 1, 1, 1, -1, 0)
@assert c ≈ 1/6
@assert cg ≈ sqrt(qdim(At(0.8), 1)) * w
@assert numeric_value(v) ≈ q6j(Level(10), js...)
\end{lstlisting}
The coupling calls take spin and magnetic labels, \code{qcg} as the pairs of $\langle j_1m_1;j_2m_2|jm\rangle$ and \code{q3j} as the two rows of the $3j$ symbol, and the assertion checks the normalization \eqref{eq:threejcg} that relates them. A principal unitary root is requested through \code{Level(k)}, which retains the exact level and its fusion support. A nearby floating-point value of $q$ passed through \code{At(q)} is evaluated at the number supplied.

Changes of representation are requested explicitly. The function \code{x_form} expands a generic expression in $x=q+q^{-1}$, \code{radical} looks for an expression in nested square roots, and \code{numeric_value} converts exact data to a number. User-defined rules use the same targets through \code{qeval}, and \cref{sec:userdefined} gives a complete example.

\paragraph*{What a result means.} The targets return three kinds of answer with different guarantees. A floating-point value is accepted when an error estimate computed alongside it satisfies a relative threshold of $2^{-40}$ or $2^{-44}$, and is otherwise recomputed at higher precision (\cref{sec:escalation}).
An exact value is an algebraic number whose equality is decided without tolerance. A zero is never inferred from a small or rounded number, i.e., a cancellation zero is returned only when a structural identity or exact arithmetic proves it (\cref{sec:zerodecisions}). Thus, an unproved zero can appear only through underflow of the output type.

\section{Coefficients as finite factorial sums}\label{sec:architecture}
Finite sums of factorial products provide the common computational object considered here. Coupling coefficients add selection rules, phase weights and square-root normalizations.

\subsection{Quantum integers and factorials}
For $q\ne0$, define the symmetric quantum integer and factorial by
\begin{equation}\label{eq:qint}
 [n]_q=\frac{q^n-q^{-n}}{q-q^{-1}},\qq
 [n]_q!=\prod_{r=1}^n[r]_q,\qq [0]_q!=1.
\end{equation}
We write $[n]$ and $[n]!$ when the value of $q$ is clear from the context. The removable singularities at $q=\pm1$ are understood by continuation. At $q=1$, the continuous limit gives $[n]_1=n$ and $[n]_1!=n!$. The principal unitary level $k\ge0$ corresponds to the root of unity
\begin{equation}\label{eq:level}
 h=k+2,\qq q=e^{i\pi/h},\qq
 [n]_q=\frac{\sin(n\pi/h)}{\sin(\pi/h)}.
\end{equation}

\subsection{Finite factorial-product sums}
The computational object is a finite sum of the form
\begin{equation}\label{eq:generalrule}
\begin{aligned}
\mathcal S_q&=\sigma R_0(q)^\eta \sum_{z=z_0}^{z_1}(-1)^{\epsilon z}M_z(q),\\
M_z(q)&=\prod_{r=1}^{m}[a_rz+b_r]_q!^{c_r},
\end{aligned}
\end{equation}
where $R_0(q)$ is a product of quantum factorials with integer exponents, $\eta\in\{1,\tfrac12\}$, $\epsilon\in\{0,1\}$ and $\sigma\in\{-1,0,1\}$. The coefficients $a_r,b_r,c_r$ and the endpoints $z_0,z_1$ are integers, with nonnegative factorial arguments throughout the support. Negative exponents represent reciprocal factors. These discrete data form a \code{FactorialSum}, and the target selects the arithmetic that evaluates it. A square-root prefactor also needs the branch convention of that target.

A classical example is Gauss' hypergeometric function 
\begin{equation}\label{eq:hypergeometricexample}
 {}_2F_1(-n,a;b;1)=\sum_{z=0}^{n}\frac{(-n)_z(a)_z}{(b)_z\,z!},
\end{equation}
where $(a)_z=\prod_{r=0}^{z-1}(a+r)$ and $(a)_0=1$, for integers $n\ge0$ and $a,b\ge1$. Its summand is an alternating factorial ratio with nonnegative integer arguments, and replacing those factorials by symmetric quantum factorials gives a rule of the form \eqref{eq:generalrule}, which \cref{sec:userdefined} implements. The construction extends to other terminating generalized hypergeometric series with suitable integer parameters. Standard basic hypergeometric conventions can require additional powers of $q$.

\subsection{Coupling, recoupling and fusion matrices}

Spins $j\in\tfrac12\Z_{\ge0}$ label the irreducible representations of $\mathrm{SU}(2)$ and its quantum deformation $U_q(\mathfrak{sl}_2)$~\cite{drinfeld,jimbo,kr,kassel,kitaev}. Writing $V_j$ for the representation of spin $j$, a triple $(a,b,c)$ is \emph{admissible} when $V_c$ lies in the decomposition of $V_a\otimes V_b$ into irreducibles, that is, when it satisfies the triangle inequalities and in addition $a+b+c \in \Z$. At a unitary level $k$ the fusion rules of $\mathrm{SU}(2)_k$ further require $a+b+c\le k$~\cite{kitaev}. Coupling and recoupling coefficients with inadmissible labels are returned as zero.

Clebsch--Gordan coefficients transform product states $|j_1m_1\rangle\otimes|j_2m_2\rangle$ into coupled states $|jm\rangle$, with $m=m_1+m_2$. The $3j$ symbols express the same coupling in a different normalization~\cite{edmonds,varshalovich,kr}. The functions \code{qcg} and \code{q3j} implement these coefficients following a common $U_q(\mathfrak{sl}_2)$ convention~\cite{kr}. Their finite sums include monomial weights, taking the form
\begin{equation}\label{eq:weightedrule}
\mathcal C_q=q^{e/2}\sigma R_0(q)^{1/2}
      \sum_{z=z_0}^{z_1}(-1)^zq^{wz}M_z(q),
\end{equation}
whose factorial data are those of the classical formula. For the action $K|m\rangle=q^m|m\rangle$ and the coproduct $\Delta(E)=E\otimes K+K^{-1}\otimes E$~\cite{kr}, the weights of $\langle j_1m_1;j_2m_2|jm\rangle_q$ are the integers
\begin{equation}\label{eq:cgweights}
\begin{aligned}
 w&=-(j_1+j_2+j+1),\\
 e&=(j_1+j_2-j)(j_1+j_2+j+1)\\[-2pt]
  &\q+2(j_1m_2-j_2m_1),
\end{aligned}
\end{equation}
stored beside the rule. The $3j$ symbol is the same coefficient in another normalization,
\begin{equation}\label{eq:threejcg}
\begin{aligned}
 \begin{pmatrix}j_1&j_2&j_3\\m_1&m_2&m_3\end{pmatrix}_q
 &=\frac{(-1)^{j_1-j_2-m_3}}{\sqrt{[2j_3+1]}}\\[-2pt]
 &\q\times\langle j_1m_1;j_2m_2|j_3,-m_3\rangle_q .
\end{aligned}
\end{equation}
The weights are essential away from $q=1$ because substituting quantum factorials into the classical formula alone gives coefficients that are not orthogonal. 

The $6j$ symbols relate two ways of coupling three spins and are evaluated by \code{q6j}. Their dimension-weighted entries form the \emph{fusion} or \emph{associativity matrix} $F^{abc}_d$ between intermediate channels $e$ and $f$. These matrices implement the $F$-moves of anyon models, fusion categories and tensor networks~\cite{kitaev,levinwen}. We use the convention
\begin{equation}\label{eq:fmatrix}
\begin{aligned}
\bigl[F^{abc}_d\bigr]_{ef}
&=(-1)^{a+b+c+d}\sqrt{[2e+1][2f+1]}\\[-2pt]
&\hspace{14mm}\times\sixj{a}{b}{e}{c}{d}{f}_q,
\end{aligned}
\end{equation}
returned as a whole matrix with its channel labels by \code{fmatrix} and entrywise by \code{fsymbol}. It is real orthogonal classically and at principal unitary levels. At nonsingular complex $q$, with consistent square-root branches, the identity $F^{\mathsf T}F=I$ uses the transpose rather than the Hermitian adjoint, so the matrix is complex orthogonal but not unitary. 

The coupling coefficients and the $F$-matrix are tied together by the coassociativity of the coproduct. Coupling three spins in the two possible orders gives two bases, and $F$ is the change of basis between them. With the conventions \eqref{eq:cgweights} and \eqref{eq:fmatrix} this relation holds without an additional phase, and it can be verified within the package from \code{qcg} and \code{fsymbol}.

Other related data are supported from the same interface, such as braiding eigenvalues (\code{rmatrix}), modular $S$ and $T$ matrices, and the tetrahedrally symmetric $G$-symbol of the Turaev--Viro state sum~\cite{tv}.

\subsection{The Racah formula}\label{sec:racah}

The factorial rule of the quantum $6j$ symbol with spin labels ${\bf j}=(j_1,\ldots,j_6)$ is determined by four triangle sums and three quadrangle sums,
\begin{equation}\label{eq:ab}
\begin{aligned}
\alpha={}&(j_1+j_2+j_3,\ j_1+j_5+j_6,\\[-2pt]
         &\phantom{(}j_2+j_4+j_6,\ j_3+j_4+j_5),\\
\beta={}&(j_1+j_2+j_4+j_5,\ j_1+j_3+j_4+j_6,\\[-2pt]
        &\phantom{(}j_2+j_3+j_5+j_6),
\end{aligned}
\end{equation}
satisfying the balance $\sum_i\alpha_i=\sum_r\beta_r$. Define the squared triangle normalization by
\begin{equation}\label{eq:triangle}
D(a,b,c)=\frac{[a+b-c]_q![a-b+c]_q![-a+b+c]_q!}{[a+b+c+1]_q!}.
\end{equation}
Writing $\mathcal{N}^2$ for the product of $D$ over the four triples whose sums are the $\alpha_i$, the Racah formula is~\mbox{\cite{kr,kl,cfs}}
\begin{equation}\label{eq:racah}
\begin{aligned}
\mathcal W_q({\bf j})&=\sixj{j_1}{j_2}{j_3}{j_4}{j_5}{j_6}_q= {\cal N}  \sum_{z=\max\alpha}^{\min\beta}(-1)^z\, T(z) ,\\
T(z)&=\frac{[z+1]_q!}{\prod_{i=1}^4[z-\alpha_i]_q!\prod_{r=1}^3[\beta_r-z]_q!}.
\end{aligned}
\end{equation}
Thus \eqref{eq:generalrule} applies with $R_0={\cal N}^2$, $\eta=\tfrac12$, $\sigma=1$ and $\epsilon=1$. At $q=1$, this gives the classical $6j$ symbol whose summation is a terminating ${}_4F_3(1)$ series. The dimension factors in \eqref{eq:fmatrix} are further factorial ratios, so its entries belong to the same class. The positive square root defines $\cal N$ classically, for positive real $q$ and on admissible unitary support, while complex branches are specified in \cref{sec:complex}.

\section{Factorial rules and multiplicative structure}\label{sec:multiplicative}
A \emph{factorial rule} is the list of integers that specifies one sum of the form \eqref{eq:generalrule}. It holds the formula as data, with no value of $q$ and no evaluated number, and every evaluation route starts from it. A rule supplies the term ratios that advance the sum, and the factor exponents that show which factors cancel within a product and which terms survive at a level.

\begin{algorithm}[t]
\caption{Evaluation of a factorial rule on a regular segment}\label{alg:factorialrule}
\begin{algorithmic}[1]
\Require rule $\mathcal R$, regular segment $[z_0,z_1]$, target
\Ensure $\sigma R_0(q)^\eta\sum_{z=z_0}^{z_1}t_z$
\IfThen{$\sigma=0$ or $z_0>z_1$}{\Return $0$}
\State $w_0\gets\sigma R_0(q)^\eta\,t_{z_0}$ \Comment{one factorial product}
\State $H\gets1$
\For{$z=z_1-1,\ldots,z_0$}
  \State $r_z\gets t_{z+1}/t_z$ \Comment{quantum integers}
  \State $H\gets1+r_zH$
\EndFor
\State \Return $w_0H$
\end{algorithmic}
\end{algorithm}

\subsection{The factorial rule}\label{sec:storage}
In the notation of \eqref{eq:generalrule}, a factorial rule is the tuple
\begin{equation}\label{eq:ruledata}
\mathcal R=\bigl(\sigma,\ \{(n,c)\},\ \{(a,b,c)\},\ z_0,\ z_1,\ \eta,\ \epsilon\bigr).
\end{equation}
The pairs $(n,c)$ stand for the factors $[n]!^c$ of the prefactor $R_0$, and the triples $(a,b,c)$ for the factors $[az+b]!^c$ of the summand $M_z$. The remaining entries are the sign ($\sigma$), the summation range ($z_0:z_1$), and the two flags for a square-root prefactor ($\eta$) and an alternating sign ($\epsilon$). The type \code{FactorialSum} stores exactly these data, and a weighted coupling coefficient adds the two integers $w$ and $e$ of \eqref{eq:weightedrule}. For a fixed formula, the number of stored integers is independent of the labels and of the length of the sum, although the integers themselves grow with the labels. For the $6j$ symbol, \eqref{eq:triangle} and \eqref{eq:racah} give
\begin{equation}\label{eq:prefab}
{\cal N}^2=\frac{\prod_{i=1}^4\prod_{r=1}^3[\beta_r-\alpha_i]!} {\prod_{i=1}^4[\alpha_i+1]!},
\end{equation}
so its rule has sixteen prefactor factorials and eight summand factorials at every spin, all determined by the lists $\alpha$ and $\beta$. Independent permutations of $\alpha$ and $\beta$ generate the $4!\cdot3!=144$ symmetries of the $6j$ symbol, comprising the 24 tetrahedral symmetries and the Regge symmetries~\cite{kr,regge}. The implementation sorts both lists, so equivalent labellings use identical factorial data without a separate canonicalization step.

Evaluation proceeds through ratios of consecutive terms in the sum~\eqref{eq:generalrule}. This is standard for hypergeometric series, and it is especially efficient for factorial products because the ratio of two consecutive terms is a product of quantum integers. For the Racah terms $t_z=(-1)^zT(z)$,
\begin{equation}\label{eq:ratio}
r_z = \frac{t_{z+1}}{t_z} =-\frac{[z+2]\prod_{r=1}^3[\beta_r-z]}{\prod_{i=1}^4[z+1-\alpha_i]}.
\end{equation}
Each step involves eight quantum integers. More generally, write $t_z=(-1)^{\epsilon z}M_z(q)$ for the terms of \eqref{eq:generalrule}. Cancelling factorials in $r_z=t_{z+1}/t_z$ leaves a product of quantum integers and their reciprocals, read directly from the stored affine arguments. Since $t_{z+1}=r_zt_z$, the sum over a segment can be written in nested form,
\begin{equation}\label{eq:horner}
\sum_{z=z_0}^{z_1}t_z=t_{z_0}\Bigl(1+r_{z_0}\bigl(1+r_{z_0+1}(1+\cdots)\bigr)\Bigr),
\end{equation}
where the innermost bracket is $1+r_{z_1-1}$. This is Horner's method~\cite{higham}, applied to the ratios. \Cref{alg:factorialrule} evaluates the brackets from the innermost outwards. Only the first term is formed as a factorial product, and every further term costs one ratio, one multiplication and one addition.

A segment is \emph{regular} when the prefactor and every term in it are finite and nonzero, so that all ratios are defined. At a root of unity the regular segments are found from the valuations of \cref{sec:cyclo}. Each evaluation kernel is constructed from this template (\cref{fig:architecture}). The numerical kernels accumulate the sum with scaling, forward or by compensated Horner nesting, under precision control, while the exact kernels clear the denominators of the ratios and work with integers or integer polynomials (\cref{sec:exact}). Julia's multiple dispatch connects each target to its kernel~\cite{julia}, and \code{Nemo} supplies the polynomial and number-field arithmetic~\cite{nemo}.

\begin{qrwidefigure}
\centering
\begin{tikzpicture}[
 box/.style={draw=black!35,rounded corners=1.5pt,align=center,inner sep=5pt,font=\small},
 ar/.style={-{Stealth[length=4pt]},thick,black!55},
 qa/.style={-{Stealth[length=4pt]},thick,dashed,linkcol!70}]
\node[box] (in) {spin labels or a finite factorial-product sum};
\node[box,fill=linkcol!5,draw=linkcol,below=6mm of in] (rule)
 {\textbf{factorial rule}\\\scriptsize tuple of integers, exponents, booleans\\\scriptsize $\mathcal R=\bigl(\sigma,\ \{(n,c)\},\ \{(a,b,c)\},\ z_0,\ z_1,\ \eta,\ \epsilon\bigr)$};
\node[box,right=15mm of rule,align=left] (val)
 {cyclotomic valuations\\\scriptsize $E_h$: zeros, poles, contributing range\\
  \scriptsize $E_d$: prefactor cancellation, square roots};
\node[box,below=12mm of rule] (ex) {exact arithmetic\\\scriptsize integer recursions and algebraic values};
\node[box,left=6mm of ex] (nu) {numerical evaluation\\\scriptsize scaled ratios and precision control};
\node[box,right=6mm of ex] (sy) {symbolic expressions\\\scriptsize deferred rule and explicit expansions};
\draw[ar] (in)--(rule);
\draw[qa] (rule)--(val);
\draw[ar] (rule)--(ex);
\draw[ar] (rule.south)--++(0,-4mm)-|(nu.north);
\draw[ar] (rule.south)--++(0,-4mm)-|(sy.north);
\node[box,below=7mm of nu] (fa) {related coefficients\\\scriptsize shared tables and column recurrences};
\draw[ar] (nu)--(fa);
\end{tikzpicture}
\caption{The factorial rule connects numerical, exact and symbolic evaluation. Ratios advance the sum, and symbolic output retains the rule until expansion is requested. Cyclotomic exponents are recovered on demand, and families additionally use recurrences between coefficients. Weighted coupling functions keep their $q$-power exponents beside the rule, with the target restrictions of \cref{tab:targets}.}\label{fig:architecture}
\end{qrwidefigure}

\subsection{Prime and cyclotomic factorization}\label{sec:cyclo}
Prime factorization underlies exact classical evaluation in WIGXJPF and WignerSymbols.jl~\cite{wig}. Legendre's formula gives the exponent of a prime $p$ in $n!$ as $\sum_{r\ge1}\lfloor n/p^r\rfloor$. Products and quotients of factorials therefore become sums and differences of integer exponents, and common factors cancel before large numbers are formed. Cyclotomic factorization provides the corresponding representation for quantum factorials~\cite{dcr}.

Let $t=q^2$. The polynomial $t^n-1$ factors over the integers into \emph{cyclotomic polynomials},
\begin{equation}\label{eq:cyclopoly}
 t^n-1=\prod_{d\mid n}\Phi_d(t),\qq \Phi_d(t)=\prod_{\zeta}(t-\zeta),
\end{equation}
where $\zeta$ runs over the primitive $d$th roots of unity. Here, $\Phi_d$ denotes the $d$th cyclotomic polynomial. For example,  $\Phi_1=t-1$, $\Phi_2=t+1$, $\Phi_3=t^2+t+1$, $\Phi_4=t^2+1$, and so on. Each $\Phi_d$ has integer coefficients and is irreducible over $\Q$, so the $\Phi_d$ play the role of primes~\cite{washington}. Since $[n]=q^{1-n}(t^n-1)/(t-1)$, each $\Phi_d$ with $d\ge2$ divides $[m]$ exactly when $d\mid m$, and hence divides $[n]!$ once for each of the $\lfloor n/d\rfloor$ multiples of $d$ up to $n$~\cite{dcr}. Thus
\begin{equation}\label{eq:cyclofact}
 [n]!=q^{-n(n-1)/2}\prod_{d=2}^{n}\Phi_d(t)^{\lfloor n/d\rfloor}.
\end{equation}  
For a product $R=\prod_a[n_a]!^{c_a}$, the cyclotomic exponents follow directly from the pairs $(n_a,c_a)$ as
\begin{equation}\label{eq:cyclocoords}
\begin{aligned}
 R&=q^{P(R)}\prod_{d\ge2}\Phi_d(t)^{E_d(R)},\\
 E_d(R)&=\sum_a c_a\lfloor n_a/d\rfloor,
\end{aligned}
\end{equation}
where the power $P(R)$ of $q$ can be read from \eqref{eq:cyclofact} in the same way. At $t=1$, $\Phi_d(1)=p$ when $d=p^r$ is a power of a prime $p$ and $\Phi_d(1)=1$ otherwise, so the exponent of $p$ in $R(1)$ is $\sum_{r\ge1}E_{p^r}(R)$. Prime factorization is thus the classical limit of cyclotomic factorization.

For evaluation the factorial rule is the more compact representation. The exponent vector of a single $[n]!$ has $n-1$ nonzero entries, so the expanded factors of a sum grow with the labels and with the number of terms. The rule has the same number of factorials at every spin. Any exponent can still be recovered from the rule by \eqref{eq:cyclocoords} with a few integer divisions. QRecoupling therefore evaluates sums from the rule and keeps the deferred cyclotomic representation of Ref.~\cite{dcr} as a legacy symbolic view. The exponents are useful for addressing two questions about individual factors.
\begin{itemize}
\item \emph{Valuations, zeros and poles.} At level $k$, $t=e^{2\pi i/h}$ with $h=k+2$ is a root of $\Phi_h$ and of no other $\Phi_d$. Thus $E_h(R)$ is the order of vanishing of an individual product $R$, with negative values indicating poles. Read from the prefactor and summands, these valuations identify vanishing and singular factors before evaluation. They do not decide whether the surviving terms cancel, or whether poles cancel between terms (\cref{sec:polynomialsupport}).
\item \emph{Exact prefactors.} For exact and generic output, the exponents of the prefactor and of the first term are combined, so that common factors cancel before any polynomial is formed, and the odd exponents determine the radical part of the square root. Only this single product is expanded into exponents. At $q=1$ the prime exponents play the same role.
\end{itemize}
These exponents describe cancellation within a term, whereas cancellation between different terms is addressed in \cref{sec:numerics}. The analogy with primes holds in the variable $t=q^2$, since $\Phi_d(q^2)$ can factor further in $q$.

\subsection{Polynomial terms and level support}\label{sec:polynomialsupport}
Each quantum integer is an integer polynomial in $x=q+q^{-1}$, so a product of quantum factorials with integer exponents is a ratio of two such polynomials. The cyclotomic exponents decide when the denominator cancels, which gives a criterion for exact polynomial arithmetic without expanding the product.
\begin{proposition}[Polynomial factorial products]\label{prop:polynomialcriterion}
Let $R(q)=\prod_a[n_a]!^{c_a}$, where $n_a\in\Z_{\ge0}$ and $c_a\in\Z$. Then $R(q)\in\Z[x]$ if and only if $E_d(R)\ge0$ for every $d\ge2$. Any finite integer linear combination of such polynomial products also belongs to $\Z[x]$.
\end{proposition}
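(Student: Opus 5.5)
The plan is to replace the cyclotomic factors $\Phi_d(t)$, $t=q^2$, by symmetric versions that are integer polynomials in $x=q+q^{-1}$. Then \eqref{eq:cyclocoords} turns into a factorization of $R$ in $\Z[x]$ with the same exponents $E_d(R)$, and both directions of the proposition can be read off from it.

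First, for $d\ge2$ I would set $\psi_d(q)=q^{-\varphi(d)/2}\Phi_d(q^2)$, where $\varphi$ is Euler's totient. Here $\varphi(d)$ is even for $d\ge3$, and $\psi_2=q^{-1}(q^2+1)=x$. For $d\ge2$ the polynomial $\Phi_d$ is monic, palindromic and has integer coefficients. Hence $\psi_d$ is invariant under $q\mapsto q^{-1}$ and is an integer combination of the symmetric Laurent polynomials $q^j+q^{-j}$. Each of these lies in $\Z[x]$ by the recursion $q^{j+1}+q^{-j-1}=x(q^j+q^{-j})-(q^{j-1}+q^{1-j})$. So $\psi_d\in\Z[x]$, it is monic, and it has degree $\varphi(d)/2\ge1$.

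Next I would factor the quantum integers. From $[n]=q^{1-n}(t^n-1)/(t-1)=q^{1-n}\prod_{d\mid n,\,d\ge2}\Phi_d(t)$ and $\sum_{d\mid n,\,d\ge2}\varphi(d)=n-1$, the powers of $q$ cancel and $[n]=\prod_{d\mid n,\,d\ge2}\psi_d$. Taking products as in \eqref{eq:cyclofact} and \eqref{eq:cyclocoords} gives
\[
R=\prod_{d\ge2}\psi_d^{E_d(R)}
\]
as rational functions of $x$. In particular the power $q^{P(R)}$ of \eqref{eq:cyclocoords} is absorbed. The direction ``$E_d(R)\ge0$ for all $d$ implies $R\in\Z[x]$'' is then immediate. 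The last sentence of the proposition follows because $\Z[x]$ is a ring.

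For the converse I would write $R=A/B$, where $A=\prod_{E_d>0}\psi_d^{E_d}$ and $B=\prod_{E_d<0}\psi_d^{-E_d}$. Suppose some $E_{d_0}<0$. The point is that distinct $\psi_d$ have disjoint root sets. A root $x_0$ of $\psi_d$ has the form $x_0=q_0+q_0^{-1}$ with $q_0^2$ a primitive $d$th root of unity. Now $x_0$ determines $q_0$ up to inversion, hence $t_0=q_0^2$ up to inversion, and inversion preserves the multiplicative order $d$. Therefore a root $x_0$ of $\psi_{d_0}$ is a zero of $B$ but not of $A$. So $R$ has a pole at $x_0$ and cannot lie in $\Z[x]$, or even in $\mathbb{C}[x]$. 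This avoids any need for irreducibility of $\psi_d$ over $\Q$.

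I expect the main care to be needed in this converse step, specifically the disjointness of root sets. The case $d=2$, where $t_0=-1$ and $x_0=0$, should be checked separately. The positive direction is routine once the symmetrization of $\Phi_d$ is set up.
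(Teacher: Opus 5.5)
Your argument is essentially the paper's proof. It uses the same balanced factors of $\Phi_d(q^2)$ (the paper's $\Psi_d$) and the same factorization $R=\prod_{d\ge2}\Psi_d^{E_d(R)}$. It obtains the converse from pairwise coprimality, which your disjoint-root argument ($x_0$ fixes $q_0^{\pm1}$, and inversion preserves the order of $q_0^2$) spells out where the paper only remarks that the roots in $q$ are distinct roots of unity. You are right that irreducibility is not needed, and in fact it fails: $\Psi_3=q^2+1+q^{-2}=x^2-1$.

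One slip must be fixed. $\Phi_d$ is palindromic of degree $\varphi(d)$, so $\Phi_d(q^2)$ has degree $2\varphi(d)$ in $q$. The balancing power is therefore $q^{-\varphi(d)}$, not $q^{-\varphi(d)/2}$. With your exponent the factor is not invariant under $q\mapsto q^{-1}$, and for $d=2$ it is not even a Laurent polynomial.

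Your own check $\psi_2=q^{-1}(q^2+1)$ and the cancellation of powers through $\sum_{d\mid n,\,d\ge2}\varphi(d)=n-1$ already use $q^{-\varphi(d)}$. So only the definition and the degree claim change. The factor has degree $\varphi(d)$ in $x$; degree $\varphi(e)/2$ belongs to the paper's real cyclotomic $\psi_e$, a different family. The parity of $\varphi(d)$ is then irrelevant, and $d=2$ needs no separate treatment.
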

\begin{proof}
For $d\ge2$ the polynomial $\Phi_d$ is palindromic of degree $\varphi(d)$, where $\varphi$ is Euler's totient. Thus, the balanced factor $\Psi_d(q)=q^{-\varphi(d)}\Phi_d(q^2)$ is invariant under $q\mapsto q^{-1}$. It therefore lies in the invariant ring
\begin{equation}\label{eq:invariant}
 \Z[q,q^{-1}]^{q\leftrightarrow q^{-1}}=\Z[x],
\end{equation}
because every symmetric integer Laurent polynomial is an integer linear combination of $1$ and $C_n=q^n+q^{-n}$, with $C_0=2$, $C_1=x$ and $C_{n+1}=xC_n-C_{n-1}$. Each quantum factorial is also invariant, so the powers of $q$ in \eqref{eq:cyclocoords} combine to give $R=\prod_{d\ge2}\Psi_d(x)^{E_d(R)}$. The $\Psi_d$ are monic in $x$ and pairwise coprime, since their roots in $q$ are distinct roots of unity. Hence $R$ is a polynomial exactly when no exponent is negative.
\end{proof}

The proposition is an identity for generic $q$, so each target substitutes a specific value of $x$. At a unitary level, $x=2\cos(\pi/h)$ and a polynomial product becomes an algebraic integer of the real cyclotomic field of \cref{sec:levelvalues}~\cite{dcr}. At $q=1$, $x=2$ and it becomes an integer. In that limit the condition is sufficient but no longer necessary, because only the sums $\sum_rE_{p^r}(R)$ have to be nonnegative for $R(1)$ to be an integer. For example, $[2]!^2[3]!/[4]!=x/(x^2-2)$ is not a polynomial, although its classical value is $1$.

At a fixed unitary level $k$, the single exponent $E_h(R)$, with $h=k+2$, classifies a product. The product $R$ vanishes when $E_h>0$, is finite and nonzero when $E_h=0$, and has a pole when $E_h<0$. When the prefactor is regular and no summand is singular, only the terms with $E_h=0$ need to be evaluated, although their sum may still vanish through cancellation. Sums whose terms have poles fall outside this classification, because poles can cancel between terms, and they would need a separate algebraic regularization.

The $6j$ rule is the standard case in which both polynomiality and regular level support hold~\cite{kr,kl,tv}. For a classically admissible labelling, the balance $\sum_i\alpha_i=\sum_r\beta_r$ makes each unnormalized term $T(z)$ of \eqref{eq:racah} equal to $[z+1]$ times a quantum multinomial coefficient, so $T(z)$ and the Racah sum $S=\sum_z(-1)^zT(z)$ belong to $\Z[x]$. If the labelling is also admissible at level $k$, every factorial argument of $\cal N$ and of the denominator of $T(z)$ is below $h$, so $\cal N$ is finite and nonzero, while $[z+1]!$ contains the factor $[h]=0$ once $z\ge k+1$. Those terms vanish, and the contributing range is
\begin{equation}\label{eq:support}
\max_i\alpha_i\le z\le\min(\min_r\beta_r,k).
\end{equation}
The normalization $\cal N$ still carries rational factors and radicals, so polynomiality holds for the sum $S$ and not for the full symbol. The package uses both facts. It computes the Racah sum with integer-polynomial arithmetic, and truncates it to the range \eqref{eq:support} before evaluation at a level.

\section{Numerical evaluation and accuracy}\label{sec:numerics}

Numerical evaluation of the sums \eqref{eq:generalrule} and \eqref{eq:weightedrule} has to control the range of the intermediate values and the loss of digits through cancellation. Scaling controls the range while cancellation is handled by an \emph{adaptive} evaluation. The sum is first evaluated in the cheapest arithmetic, together with an estimate of its own error, and it is repeated in more accurate arithmetic only when that estimate is too large. One such evaluation of the sum is called a \emph{pass}. We describe these steps at a principal unitary level for unweighted rules, and then how they extend to weighted coupling coefficients and to other targets.

\subsection{Scaled representation}\label{sec:scaling}
For $0\le n<h$, the level tables store the nonvanishing factorials in the form
\begin{equation}\label{eq:split}
[n]!=m_n\,2^{e_n},\qq \tfrac12\le m_n<1,\q e_n\in\Z.
\end{equation}
The plain and compensated passes use this split representation. The plain pass uses one floating-point mantissa, $m_n^{\rm hi}$. The compensated pass also retains a small correction $m_n^{\rm lo}$, and approximates the factorial by $(m_n^{\rm hi}+m_n^{\rm lo})2^{e_n}$. These parts are obtained from tables built with 64 extra bits beyond the working precision. Inverse factorials use the same representation, so factorial products combine mantissa multiplications with integer exponent additions. Thus, for one contributing segment, we write
\begin{equation}\label{eq:scaledsum}
S_{\rm seg}=a\,2^{e}\sum_{j=0}^{N}\tau_j,\qq
\tau_0=1,\q \tau_{j+1}=\tau_j\,r_{z_0+j},
\end{equation}
where $a\,2^e={\cal N}\,t_{z_0}$ includes the sign of the first term. The ratio $r_z=t_{z+1}/t_z$ in \eqref{eq:ratio} uses tables of $[n]$ and $1/[n]$. If $|\tau_j|>2^{256}$, the running term and partial sum are multiplied by $2^{-256}$ and $e$ is increased by $256$. These binary rescalings are exact within the normal floating-point range, and they control intermediate growth without using logarithms. The final value is still limited by the range of the output type, and scaling does not recover digits lost to cancellation. A plain pass costs $N$ ratio steps, each a product of a few table entries (eight for the $6j$ symbol), and the compensated pass costs a small constant factor more.

For a short sum with small labels the separate exponents are unnecessary, because the factorials $m_n2^{e_n}$ are themselves floating-point numbers far inside the exponent range. A \emph{short pass} then runs the operations of the plain pass on these unsplit values. Scaling by a power of two does not change a rounding, so the short pass returns the same value as the plain pass, at less than half its cost. A bound on the sum of the exponents $e_n$ of each product decides beforehand whether the short pass applies.

\subsection{Cancellation and conditioning}\label{sec:cancellation}
For a nonzero sum $S=\sum_z t_z$, define
\begin{equation}\label{eq:kappa}
\kappa=\frac{\sum_z|t_z|}{|S|}
\end{equation}
as its \emph{condition number}~\cite{dcr,higham}. Relative perturbations of at most $\epsilon$ in each term change $S$ by at most $\epsilon\kappa$ in relative terms, so $\kappa$ measures how strongly the sum amplifies errors in its terms. For unit roundoff $u=2^{-53}$, the characteristic error scales are $u\kappa$ for a plain pass and $u+u^2\kappa$ for a compensated pass~\cite{ors,comp}, to leading order in precision and cancellation. Full bounds also depend on the operation count and on the accuracy of the input terms.

For the computed value $\widehat{S}$ of a nonzero sum $S$, the relative error
\begin{equation}\label{eq:relativeerror}
\varepsilon_{\rm rel}=\frac{|\widehat{S}- S|}{|S|}
\end{equation}
measures the absolute error as a fraction of the sum. In \cref{fig:accuracy} the sum is the $6j$ symbol $\mathcal W$ of \eqref{eq:racah}, and its exact value is represented by an independently checked high-precision reference. The relative error is undefined at an exact zero, so zero decisions are tested separately.

The accuracy experiment uses 1209 nonzero inputs selected before evaluation. They comprise 245 equal-spin labels at $k=4j+1$, 120 random and 40 fusion-boundary samples at each of the levels $k=20,50,100,200,500,1000$, and 4 small nonzero values. Each reference is a direct factorial Racah sum, independent of package tables, ratios and recurrences, evaluated at 2048 and 3072 bits and required to agree to 200 decimal digits. Compensation delays the loss of accuracy but eventually loses digits too, whereas the adaptive evaluator of \cref{alg:numeric}, which repeats a rejected sum in more accurate arithmetic, keeps the largest relative error at $\varepsilon_{\rm rel}=3.4\times10^{-14}$ and preserves nonzero values down to $1.4\times10^{-204}$. A separate algebraic-number evaluation confirms 5 exact zeros and 10 nonzero controls at adjacent levels, and both the numerical output and the zero query classify all of them correctly.

\begin{qrwidefigure}
\centering
\includegraphics[width=\textwidth]{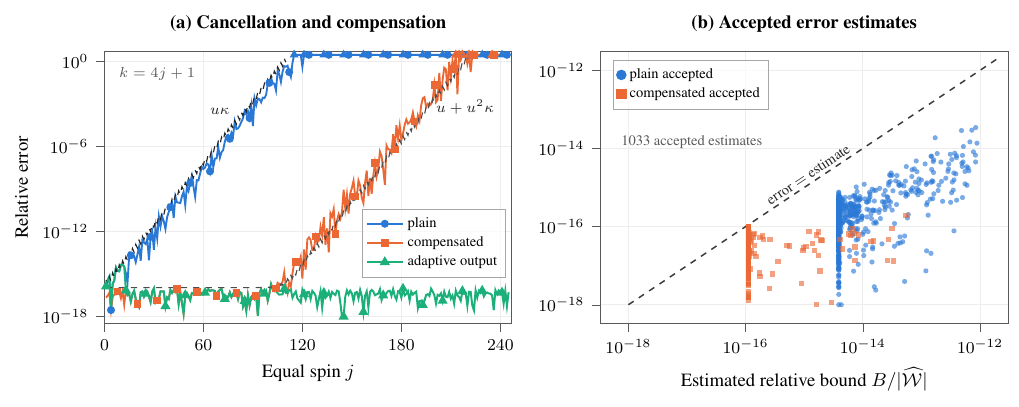}
\caption{Accuracy of the level-$k$ $6j$ evaluator. (a) Relative error of the plain pass, the compensated pass and the adaptive output for equal spins $j$ at $k=4j+1$, with the guides $u\kappa$ and $u+u^2\kappa$ as thin dark lines. (b) Relative error against the estimate $B/|\widehat{\mathcal W}|$ for the results accepted from a plain or compensated pass. Points below the diagonal have an error smaller than their estimate. Errors below $10^{-18}$ are drawn at that floor, and triangles in (a) mark errors above the plotting limit.}
\label{fig:accuracy}
\end{qrwidefigure}

\subsection{Adaptive evaluation and acceptance tests}\label{sec:escalation}
Each pass produces a candidate value $v$ for the full sum and an estimate $B$ of its absolute error. The estimate is accumulated during the pass from the magnitudes of the terms and partial sums. It accounts for the rounding errors of forming the terms, summing them and applying the normalization. \Cref{app:error} derives the error estimates for the plain pass.

A candidate is \emph{accepted} at a threshold $\eta$ when $v$ and $B$ are finite, $v\ne0$ and $B\le\eta|v|$. It is then returned as $\widehat{S}=v$. A computed zero never passes this relative test, so zeros are decided by exact means (\cref{sec:zerodecisions}).

The adaptive evaluation of \cref{alg:numeric} is a sequence of stages of increasing cost, and a stage is entered only when no earlier one has returned a value. The valuations $E_h$ of \cref{sec:polynomialsupport} first select the contributing terms and reject a sum with a singular term. When the short pass of \cref{sec:scaling} applies it runs first, and otherwise the plain pass forms and adds the scaled terms in working precision. Either is accepted at $\eta=2^{-40}$. If it is rejected, a test for terms that cancel in pairs comes next, because it costs only a few comparisons. The compensated pass repeats the sum with double-word ratios and carries the rounding error of every multiplication and addition through the Horner nesting~\cite{ors,comp,dekker,joldes}. This lowers the error scale from $u\kappa$ to $u+u^2\kappa$, and the compensated pass is accepted at $\eta = 2^{-44}$. A sum rejected by both passes either cancels very strongly or cancels exactly. For a $6j$ symbol, the next stage computes it as one entry of a three-term recurrence, discussed below. If that route is unavailable or declines, the exact checks of \cref{sec:zerodecisions} decide whether the sum is zero. A nonzero sum is finally repeated in three- and four-word arithmetic~\cite{hida} and then in \texttt{BigFloat} arithmetic~\cite{mpfr} at doubling precision, until the estimate shows that enough digits survive the cancellation.

\begin{algorithm}[h]
\caption{Adaptive evaluation of $\mathcal W$ at level $k$}\label{alg:numeric}
\begin{algorithmic}[1]
\Require unweighted factorial rule $\mathcal R$, level $k$
\Ensure approximation $\widehat{\mathcal W}$, or an exact zero
\State keep the terms with $E_h=0$ \Comment{error on a pole}
\IfThen{no term remains}{\Return $0$}
\State $(v,B)\gets$ short pass if it applies, else plain pass
\IfThen{$v$ is accepted at $2^{-40}$}{\Return $v$}
\IfThen{the terms cancel in pairs}{\Return $0$}
\State $(v,B)\gets$ compensated pass in double words
\IfThen{$v$ is accepted at $2^{-44}$}{\Return $v$}
\IfThen{level reflection proves cancellation}{\Return $0$}
\IfThen{a $6j$ recurrence is accepted}{\Return its $v$}
\If{a modular image does not exclude zero}
  \IfThen{exact arithmetic gives zero}{\Return $0$}
\EndIf
\For{3 and 4 words, then $p,2p,4p,\ldots$ bits}
  \State $(v,B)\gets$ plain pass in that arithmetic
  \IfThen{$v$ is accepted}{\Return $v$}
\EndFor
\end{algorithmic}
\end{algorithm}

The compensated pass has the stricter threshold because its estimate is less conservative than that of the plain pass. Both estimates rest on the assumptions stated in \cref{app:error} and are not certified enclosures. In the accuracy experiment they held in every case. For all 1033 accepted results in \cref{fig:accuracy}(b) the error stayed below its estimate and reached at most 0.901 of it. The other 176 results were returned by later stages and are not plotted.

\paragraph*{Recurrence relations.}
The condition number $\kappa$ is a property of the sum and not of the coefficient that it represents. Scaling, compensation and wider arithmetic all evaluate the same sum, and only increase the precision at which its cancellation is resolved. For coupling and recoupling coefficients, recurrence relations provide a useful alternative representation.
With five labels fixed, the $6j$ symbols $f(x)=\sixj{x}{j_1}{j_2}{j}{j_3}{j_4}_q$ over the admissible values of $x$ form a \emph{column}, which satisfies the quantum analogue of the Schulten--Gordon recurrence~\cite{sg,tw}
\begin{equation}\label{eq:recurrence}
 A_x f(x+1)+B_xf(x)+A_{x-1}f(x-1)=0,
\end{equation}
where $A_x^2$ and $B_x$ are rational functions of a few quantum integers. In the classical case the same recurrence underlies asymptotic expansions of the $6j$ symbol~\cite{dupuislivine}, and a further recursion is known for its square~\cite{bonzomlivine}. Propagated in a stable direction~\cite{gautschi}, the recurrence reaches an entry from the end of its column without forming a cancelling sum. Its error depends on the number of steps and on how small the entry is relative to the largest in the column, and not on $\kappa$. Any of the six labels can be the running one, and the shortest of the six paths from a column end has exactly as many steps as the Racah sum has ratios. A single symbol therefore costs no more steps by recurrence than by summation. The package uses the recurrence in two ways. For a single $6j$ symbol it is the stage of \cref{alg:numeric} that follows the two passes, and it is declined when the entry is too small relative to the largest in its column. For a whole fusion matrix (\code{fmatrix}) it is the primary route, with constant work per entry (\cref{sec:performance}). Coupling coefficients satisfy analogous recurrences in a magnetic label and in the coupled spin~\cite{kr}. No such recurrence is available for a general factorial rule. 

\subsection{Exact zero decisions within numerical evaluation}\label{sec:zerodecisions} After selection rules and level valuations remove empty or termwise-vanishing cases, write $\mathcal W={\cal N} S$ with a regular, nonzero prefactor $\cal N$. Then $\mathcal W=0$ exactly when the finite sum $S$ vanishes, so the square-root normalization need not be constructed. As in adaptive geometric predicates~\cite{shewchuk}, exact arithmetic is used only when floating-point evaluation with an error estimate cannot decide.

The evaluator first looks for structural identities. For instance, equal terms with opposite signs cancel in pairs, and a level-reflection identity can prove cancellation directly. If neither applies, the sum is evaluated in a finite field. Provided the denominators remain invertible, a nonzero residue proves $S\ne0$. A zero residue proves nothing, because a nonzero value can map to zero modulo the chosen prime, and the same holds when the modular image cannot be formed. In both cases the sum is returned as zero only after exact confirmation.

The confirming arithmetic follows the target:
\begin{itemize}
\setlength{\itemsep}{2pt}\setlength{\parsep}{0pt}
\item At $q=1$, integer Horner summation gives $S/t_{z_0}=P/Q$, with $Q\ne0$. Cancellation is exact precisely when $P=0$.
\item At level $k$, an unweighted sum is reduced in the real cyclotomic field of \cref{sec:levelvalues}. With $h=k+2$, write $S(x)=N(x)/D(x)$ after cancellation, with $D(x_h)\ne0$. Then
\begin{equation}\label{eq:exactzerocriterion}
 S(x_h)=0\ \Longleftrightarrow\ N(x)\equiv0\pmod{\psi_{2h}(x)}.
\end{equation}
A short sum is decided more cheaply in integers. With its denominators cleared it is an integer combination of powers of $\zeta_{2h}=e^{i\pi/h}$, and it vanishes exactly when the polynomial with these coefficients is divisible by the cyclotomic polynomial $\Phi_{2h}$. For \code{qcg} and \code{q3j}, the weighted sum is instead tested in $\Q(\zeta_{2h})$, retaining the factors $q^{wz}$.
\item For real or complex $q$, finite binary components are represented as exact rationals. Horner summation then decides zero at that stored parameter, including integer $q$-power weights.
\end{itemize}
If exact arithmetic excludes zero, numerical evaluation continues to higher precision. For unweighted rules, the separate question of vanishing identically in $q$ is settled by an exact polynomial numerator after modular screening.

\subsection{Coupling coefficients and other targets}\label{sec:complex}
A weighted coupling coefficient at a level is evaluated from \eqref{eq:weightedrule} by a shorter sequence of stages than \cref{alg:numeric}. Its quantum integers are real and correctly rounded, and the phases $q^{wz}=e^{i\pi r/h}$, with $r\equiv wz\pmod{2h}$, are read from a table instead of being formed as powers of a rounded $q$. The first stage is already a compensated pass, which keeps each term, its phase and the running sum in double words and is accepted for $\kappa$ up to about $10^{15}$. For larger $\kappa$ the coefficient is taken from a recurrence in the magnetic label, and if that declines the sum is repeated in \texttt{BigFloat} arithmetic at doubling precision. A sum that still cancels there is tested for an exact zero (\cref{sec:zerodecisions}).

The classical target \code{Classical()} uses the same sequence of plain and compensated passes, with exact integer evaluation as its final fallback. At real or complex $q$, the evaluator uses running bounds and wider arithmetic without level classification. Real \texttt{Float64} tables are built in double words and rounded once per entry. Every kernel evaluates the parameter that was supplied, including its deviation from one. An input intended at high precision must therefore be constructed at that precision, because converting a rounded input does not restore missing digits.

Complex deformation also requires a branch convention. Each quantum integer $[n]$ is the product of the balanced factors $\Psi_d$ (\cref{sec:polynomialsupport}) over the divisors $d>1$ of $n$, and the prefactor takes the principal square root of each $\Psi_d$ separately. Replacing these by the principal square root of their product can change the sign, so comparisons must use the same branches.

\section{Exact and symbolic computation}\label{sec:exact}

The exact targets construct values and expressions that can be compared without a numerical tolerance. For unweighted sums, \code{Exact()}, \code{Exact(k)} and \code{Symbolic()} select integer, number-field and generic arithmetic, respectively. The coupling functions \code{qcg} and \code{q3j} support \code{Exact()} only (\cref{tab:targets}). The Horner organization of \cref{alg:factorialrule} is retained with cleared denominators, so the recursion runs on integers or integer polynomials. Division and square roots are postponed until the sum is combined with its prefactor.

\subsection{Classical values}
At $q=1$ the ratios are rational numbers, and a backward Horner recursion accumulates the sum as a numerator and denominator in integer arithmetic, with one reduction at the end. The prefactor and the first term are reduced through prime valuations (\cref{sec:cyclo}), so common factors cancel before large products are formed. The result is a sign and a reduced rational number under a square root, $v=\sigma\sqrt r$. The listing below illustrates the classical exact target:
\begin{lstlisting}
a = q6j(Exact(), 1, 1, 1, 1, 1, 1)    # == 1//6
b = q3j(Exact(), 2, 2, 0, 1, -1, 0)   # == -√(1//5)
cg0 = qcg(Exact(), 1//2, 1//2, 1//2, -1//2, 0, 0)
@assert b^2 == 1//5
@assert cg0^2 == 1//2
\end{lstlisting}

\subsection{Values at a level}\label{sec:levelvalues}
At $q=e^{i\pi/h}$, exact values use the coordinate $x=q+q^{-1}$ introduced in \cref{sec:polynomialsupport}. Quantum integers are symmetric Laurent polynomials in $q$, so by the identity \eqref{eq:invariant} they are polynomials in $x$ with integer coefficients. In particular, $[n]=U_{n-1}(x/2)$, where $U_{n-1}$ is a Chebyshev polynomial of the second kind. Pairing inverse roots of unity gives the real cyclotomic factors $\psi_e(x)$, $e\ge3$, the minimal polynomials of $2\cos(2\pi/e)$. Their exponents in a factorial product are $E_{e/\gcd(e,2)}$ from \eqref{eq:cyclocoords}. At level $k$ the coordinate specializes to
\begin{equation}\label{eq:field}
 x_h=2\cos\left(\frac{\pi}{h}\right),\q K_h=\Q(x_h)\simeq\Q[x]/(\psi_{2h}),
\end{equation}
the maximal real subfield of $\Q(\zeta_{2h})$, of degree $\varphi(2h)/2$, where $\varphi$ is Euler's totient~\cite{washington}. The exact result of the sum \eqref{eq:generalrule} evaluated at a level therefore takes the form
\begin{equation}\label{eq:exactform}
 v=P(x_h)\sqrt{R(x_h)},\qq P,R\in\Q[x],
\end{equation}
where the odd $\psi_e$ exponents identify the factors retained under the square root.

The kernel clears denominators in the Horner recursion of \cref{alg:factorialrule}. Write the signed ratio \eqref{eq:ratio} as $A_z/B_z$, with $A_z,B_z$ integer polynomials in $x$. Starting from $P=Q=1$, set
\begin{equation}\label{eq:clearedhorner}
P\gets B_zQ+A_zP,\qq Q\gets B_zQ,
\end{equation}
from the last term to the first, using the previous values of $P$ and $Q$ in both updates. Reduction modulo $\psi_{2h}$ keeps the degree below $\varphi(2h)/2$ for any number of terms, although the integer coefficients can still grow. The final division is performed once in $K_h$. If specialization makes a shared denominator factor vanish, the recursion is repeated before reduction so that common factors can be cancelled exactly. Zero decisions then use the resulting algebraic value.

For a given set of spins \code{js}, its exact value in $x$, its nested-radical form and its numerical approximation are separate requests:
\begin{lstlisting}
js = (1,1,1,1,1,1)
v = q6j(Exact(10), js...)      # (2x² − 7)/3
r = radical(v)                 # (2√3 − 3)/3
@assert numeric_value(v) ≈ q6j(Level(10), js...)
\end{lstlisting}
A nested-radical form does not always exist. Since $K_h/\Q$ is abelian, a classical Galois-theoretic argument gives the criterion~\cite{washington,wantzel}.
\begin{proposition}\label{prop:radicals}
Let $v$ be a real algebraic number with $v^2\in K_h$. Then $v$ is expressible by rational numbers and nested real square roots if and only if $[\Q(v^2):\Q]$ is a power of two.
\end{proposition}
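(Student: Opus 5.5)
The plan is to reduce both directions to the Galois theory of the abelian extension $K_h/\Q$, with $w=v^2$ as the central object. First I would fix what ``expressible by rational numbers and nested real square roots'' means: $v$ lies in the top field of a tower $\Q=F_0\subset F_1\subset\cdots\subset F_n\subset\mathbb R$ with $F_{i+1}=F_i(\sqrt{a_i})$, $a_i\in F_i$, $a_i\ge0$, and $\sqrt{a_i}$ the real nonnegative root. Then each $[F_{i+1}:F_i]\in\{1,2\}$, so $[F_n:\Q]$ is a power of two. Since $v=\pm\sqrt{w}$ and $w\ge0$ because $v$ is real, $v$ is expressible exactly when $w$ is, up to one final real square root.

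For the forward direction, suppose $v\in F_n$ for such a tower. Then $w=v^2\in F_n$, so $\Q(w)\subseteq F_n$. The tower law makes $[\Q(w):\Q]$ a divisor of $[F_n:\Q]=2^n$, hence a power of two. This direction does not use $K_h$ at all.

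For the converse, suppose $[\Q(w):\Q]=2^m$. Here I would use that $K_h=\Q(x_h)\simeq\Q[x]/(\psi_{2h})$ is the maximal real subfield of $\Q(\zeta_{2h})$. So $K_h/\Q$ is Galois with abelian group $(\Z/2h\Z)^\times/\{\pm1\}$. Every subfield of $K_h$ is then Galois over $\Q$, in particular $\Q(w)$, and $G=\mathrm{Gal}(\Q(w)/\Q)$ is an abelian group of order $2^m$. A finite $2$-group has a chain of subgroups $G=H_0\supset H_1\supset\cdots\supset H_m=1$ with successive indices $2$; for abelian $G$ one can simply take a subgroup of index $2$ repeatedly. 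The fixed fields $L_i=\Q(w)^{H_i}$ form a tower $\Q=L_0\subset\cdots\subset L_m=\Q(w)$ of quadratic steps.

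The key step, and the one I expect to need the most care, is showing that every step is a \emph{real} square root with a nonnegative radicand. Since the characteristic is zero, each quadratic step is $L_{i+1}=L_i(\delta_i)$ with $\delta_i^2=d_i\in L_i$; one may take $\delta_i=\beta-\tfrac{b}{2}$, where $\beta$ is a generator with minimal polynomial $X^2+bX+c$ over $L_i$. All $L_i$ lie in $K_h\subset\mathbb R$ (with $x_h=2\cos(\pi/h)$ real), so $\delta_i$ is real and $d_i=\delta_i^2\ge0$. Replacing $\delta_i$ by $-\delta_i$ if necessary, $\delta_i=\sqrt{d_i}$ is the nonnegative real root. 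This gives $w\in L_m$ as an expression in nested real square roots. A last step $v=\pm\sqrt{w}$ with $w\ge0$ completes the tower for $v$, which proves the equivalence.

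A remark worth including: the criterion concerns $\Q(v^2)$, not $\Q(v)$. The latter need not be normal, but its degree is $[\Q(v^2):\Q]$ or twice that, so nothing is lost.
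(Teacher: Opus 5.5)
Your proposal is correct and follows the paper's argument: the tower law for the forward direction, and for the converse the abelian (hence Galois) subfield $\Q(v^2)\subseteq K_h$, a chain of index-two subgroups giving a quadratic tower, and one final square root. You also check that each quadratic step is a real square root of a nonnegative radicand because $K_h\subset\mathbb{R}$, a detail the paper leaves implicit.
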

\begin{proof}
Every subfield $L=\Q(v^2)$ of the abelian field $K_h$ is Galois over $\Q$ with abelian group. If $[L:\Q]$ is a power of two, a chain of subgroups realizes $L$ as a tower of quadratic extensions, and the real square root of $v^2$ adds at most one step. Conversely, if $v$ lies in a tower of quadratic extensions, then $\Q(v^2)$ is a subfield of that tower and its degree divides a power of two.
\end{proof}
The function \code{radical} returns \code{NoRadical} when the criterion fails or the descent exceeds its budget, and the result records which of the two occurred.

\begin{remark}[Exact arithmetic and identities]\label{rem:exactarithmetic}
Exact level values support addition, multiplication and exact equality, and values with different radical classes are collected in \code{ExactXSum}. Equality is decided structurally or, when necessary, by comparing exact algebraic numbers in the chosen real embedding with Nemo~\cite{nemo}. Fusion rules, dimension identities and coherence relations of $U_q(\mathfrak{sl}_2)$, such as orthogonality and the Biedenharn--Elliott (pentagon) identity~\cite{kr,kl}, can therefore be verified exactly.\footnote{At level $k=6$, for example, we verified that the residual of the Biedenharn--Elliott identity vanishes exactly for each of the 37208 ordered labellings whose right-hand side is admissible.} \Cref{sec:coherenceworkflow} gives examples. Exact arithmetic also distinguishes a vanishing recoupling coefficient from a small numerical value, and the query \code{iszero_at} makes this decision without constructing the algebraic value (\cref{sec:zerodecisions}).
\end{remark}

\subsection{Generic expressions}
\code{Symbolic()} retains the factorial rule without expanding it. An explicit request expands it either in $x$, with \code{x_form}, or as cyclotomic factors in $q$, with \code{phi_form}, which exhibits the multiplicative structure of \cref{sec:cyclo} directly.
\begin{lstlisting}
s = q6j(Symbolic(), js...)
f = x_form(s)      # (x² − 3)/(x⁴ − 3x² + 2)
g = phi_form(s)    # q² Φ₁₂/(Φ₃ Φ₆ Φ₈)
\end{lstlisting}
Here $\Phi_d$ is a cyclotomic polynomial in $q$ itself, so the factor $\Phi_6(q^2)$ of \cref{sec:cyclo} appears as $\Phi_{12}$. At $x=2$, that is $q=1$, the expression gives $1/6$, and at $x=2\cos(\pi/12)$ it reduces to the level value above. A generic identity proved in $\Q(x)$ therefore holds at every level at which its denominators are nonzero and its branches compatible, and it does not have to be proved again level by level. The generic identity in \cref{sec:coherenceworkflow} illustrates this use.

\section{Computational workflows}\label{sec:workflow}

The previous sections showed individual calls of the public interface. Here we give examples that evaluate families, check exact identities, construct an observable in a new coupling basis and define a new finite sum.

\subsection{Batches, families and precision}
A vector of labels can be evaluated in one call, which shares the tables of its target and gives each worker thread its own scratch storage. Sequential scalar calls can also reuse one workspace (\code{EvaluationWorkspace}). In the listing below the target also requests wider precision, and \code{fmatrix} computes a whole family through the recurrence \eqref{eq:recurrence}.
\begin{lstlisting}
using QRecoupling, LinearAlgebra
L = collect(all_6j(k = 10))   # 17,732 labels
w = q6j(L; k = 10)            # one batch
wb = q6j(Level(10; T = BigFloat), js...)
F, e, f = fmatrix(40, 35, 30, 45; k = 300)
@assert w[5] == q6j(Level(10), L[5]...)
@assert transpose(F) * F ≈ I
\end{lstlisting}
The first assertion compares a batch entry with its scalar evaluation. The second checks orthogonality of the $61\times61$ matrix, whose residual is about $10^{-15}$.

\subsection{Fusion data and exact identities}\label{sec:coherenceworkflow}
At level $k=3$, the integer-spin sector contains the Fibonacci fusion rule $1\otimes1=0\oplus1$~\cite{nayak}. Writing $\varphi=(1+\sqrt5)/2$, the package convention gives
\begin{equation}\label{eq:fib}
 d_1=\varphi,\qq
 F^{111}_1=\begin{pmatrix}
 \varphi^{-1}&-\varphi^{-1/2}\\
 -\varphi^{-1/2}&-\varphi^{-1}
 \end{pmatrix}.
\end{equation}
Signs of individual entries depend on the fusion-basis gauge, fixed here by \eqref{eq:fmatrix}, so comparison with another anyon convention must account for it~\cite{kitaev,nayak}.
\begin{lstlisting}
F, rows, cols = fmatrix(1, 1, 1, 1; k=3)
φ = (1 + √5)/2
Fref = [1/φ -1/√φ; -1/√φ -1/φ]
@assert rows == cols == [0, 1]
@assert F ≈ Fref
@assert transpose(F)*F ≈ I
d = qdim(Exact(3), 1)
@assert iszero(d^2 - d - 1)
\end{lstlisting}
The listing checks the channel ordering, the convention of the numerical matrix and the exact value of the dimension.

Recoupling identities~\cite{kr,kl} can be checked before any level is chosen. With $d_j=[2j+1]$, the orthogonality relation $\sum_e d_e\sixj{1}{1}{e}{1}{1}{1}_q^2=d_1^{-1}$ becomes a rational-function identity in $x=q+q^{-1}$, which can be established for generic $q$.
\begin{lstlisting}
ds(j) = x_form(qdim(Symbolic(), j))
ws(e) = x_form(q6j(Symbolic(), 1, 1, e, 1, 1, 1))
lhs = sum(ds(e) * ws(e)^2 for e in 0:2)
@assert lhs == ds(1)^-1 
\end{lstlisting}
The assertion compares two rational functions of $x$, so it proves the orthogonality relation for these labels at generic $q$. 

\subsection{An intermediate-channel observable}
A recoupling matrix also transports observables between coupling bases. At level $k$, the quantum Casimir of the intermediate channel $f$, after a shift and rescaling, has eigenvalues $\lambda_f=2\cos((2f+1)\pi/h)$, where $h=k+2$. In the other coupling basis, the operator $K=F\operatorname{diag}(\lambda_f)F^{\mathsf T}$ is tridiagonal, which is the recurrence \eqref{eq:recurrence} written as an eigenvalue equation~\cite{kr,tw,askeywilson,terwilliger}. The following example checks this property.
\begin{lstlisting}
F, e, f = fmatrix(2, 3, 2, 3; k=20)
θ = π/22
λ = 2 .* cos.((2 .* f .+ 1) .* θ)
K = F * Diagonal(λ) * transpose(F)
offband = [K[i,j] for i in axes(K,1)
                    for j in axes(K,2)
                    if abs(i-j) > 1]
@assert maximum(abs, offband) < 1e-12
@assert transpose(F)*F ≈ I
\end{lstlisting}

\subsection{A user-defined terminating series}\label{sec:userdefined}

A finite sum of factorial products can be constructed and evaluated within the package. We illustrate this with the ${}_2F_1$ series \eqref{eq:hypergeometricexample}. For $n\ge0$ and integers $a,b\ge1$, its summand is
\begin{equation}\label{eq:hypergeometricfactorials}
\begin{aligned}
 \frac{(-n)_z(a)_z}{(b)_z\,z!}
 &=(-1)^z\frac{n!(b-1)!}{(a-1)!}\\
 &\q\times\frac{(a+z-1)!}{(n-z)!\,(b+z-1)!\,z!}.
\end{aligned}
\end{equation}
Replacing each factorial by its quantum counterpart gives the input below for its factorial rule. A \code{prefactor} pair \code{n => c} denotes $[n]!^c$, and a \code{factors} triple \code{(u,v,c)} denotes $[uz+v]!^c$. For example, \code{(-1,n,-1)} supplies $1/[n-z]!$.
\begin{lstlisting}
n, a, b = 3, 2, 4
rule = FactorialSum(0:n;
    prefactor=[n=>1, b-1=>1, a-1=>-1],
    factors=[(1,a-1,1), (-1,n,-1),
             (1,b-1,-1), (1,0,-1)],
    alternating=true)
@assert qeval(Exact(), rule) == 1//5
real_value = qeval(At(0.8), rule)
level_value = qeval(Exact(10), rule)
@assert numeric_value(level_value) ≈
        qeval(Level(10), rule)
expr = x_form(qeval(Symbolic(), rule))
\end{lstlisting}
The classical request returns ${}_2F_1(-3,2;4;1)=1/5$. At level $10$ the value is $(3-\sqrt3)/6$, with all factorial arguments below $h=12$, so the specialization is regular. The real and symbolic calls use the same factorial deformation defined in \cref{sec:architecture}. Changing the rule selects another sum within \eqref{eq:generalrule} and requires no new evaluation kernel.

\FloatBarrier
\section{Performance}\label{sec:performance}
\begin{qrwidefigure}
\centering
\includegraphics[width=0.49\textwidth]{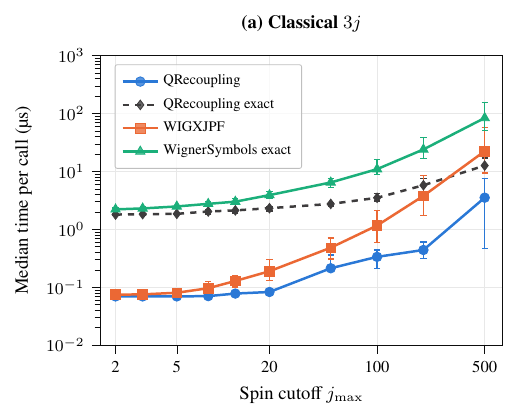}\hfill
\includegraphics[width=0.49\textwidth]{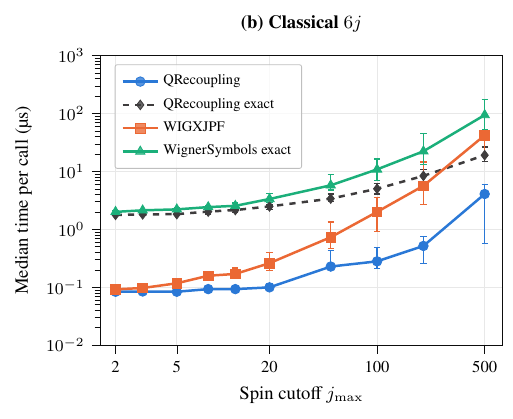}
\caption{Classical scalar cost. At each cutoff, 80 distinct ordered admissible inputs are sampled with largest spin in $(j_{\max}/2,j_{\max}]$. Exact zeros are excluded without replacement across all cutoffs. Exact routes include \texttt{Float64} conversion.}
\label{fig:classicalcost}
\end{qrwidefigure}

This section reports benchmarks of QRecoupling.jl. They measure the cost of single coefficients, of matrices and batches, of exact output, and of setup. All measurements use QRecoupling~0.5.1, Julia 1.12.4 and an Apple M2 Pro, on one thread unless stated otherwise.
Scalar timings are medians of 41 samples per input, where a sample is the mean of 50 consecutive warmed calls, because a single call at small spins is too short to time reliably. Labels are prepared outside the timed call, and compilation and first table construction are measured separately. Figures show the median and interquartile range across inputs. Every timed output is compared with an independent high-precision reference, and the relative-error target is $2^{-40}$, the acceptance threshold in \cref{sec:escalation}.

\subsection{Scalar coefficients}
\begin{qrwidefigure}
\centering
\includegraphics[width=0.49\textwidth]{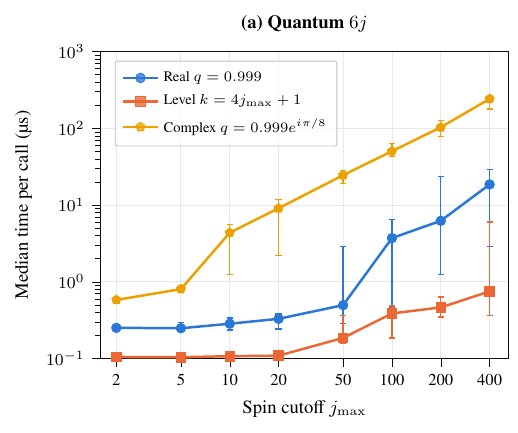}\hfill
\includegraphics[width=0.49\textwidth]{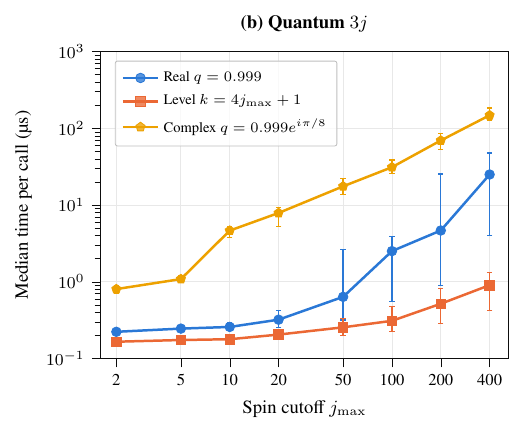}
\caption{Costs across numerical deformations: (a) $6j$ symbols through $j_{\max}=400$; (b) quantum $3j$ symbols through $j_{\max}=400$. Each experiment samples 80 distinct ordered admissible inputs per cutoff, shared across its three deformations, with largest spin in $(j_{\max}/2,j_{\max}]$ and $3j$ magnetic labels summing to zero. All plotted outputs meet the error target.}\label{fig:targets}
\end{qrwidefigure}

\Cref{fig:classicalcost} compares classical \code{q3j} and \code{q6j} with WIGXJPF~1.14~\cite{wig} and WignerSymbols.jl~2.0.0, all returning \texttt{Float64}. At the smallest spins QRecoupling and WIGXJPF are comparable, with medians within ten per cent of each other at $j_{\max}=2$. The ratio of medians then grows in favour of QRecoupling, to about $2.3$ for $3j$ and $2.6$ for $6j$ at $j_{\max}=20$, and to 6.4 and 10 at $j_{\max}=500$ (\cref{tab:classicalcost}). Up to $j_{\max}=20$ nearly all sums take the short pass (\cref{sec:scaling}), which explains the step in the QRecoupling curves between $20$ and $50$. The result cache of WignerSymbols is cleared before every call, so its samples are single calls, and the arithmetic tables of all routes stay warm.
\begin{table}[!htbp]
\centering\small
\caption{Median time per call (\textmu s) at selected spin cutoffs from \cref{fig:classicalcost}. QR denotes QRecoupling, WIG denotes WIGXJPF and WS denotes WignerSymbols. ``Exact'' includes construction and \texttt{Float64} conversion.}
\label{tab:classicalcost}
\begingroup\setlength{\tabcolsep}{3.5pt}
\begin{tabular}{@{}l r r r r r@{}}\toprule
Symbol & $j_{\max}$ & QR & QR exact & WIG & WS exact\\\midrule
$3j$ & 5 & 0.0692 & 1.85 & 0.08 & 2.48\\
$3j$ & 50 & 0.212 & 2.74 & 0.48 & 6.44\\
$3j$ & 500 & 3.52 & 12.6 & 22.4 & 83.7\\
\midrule
$6j$ & 5 & 0.0833 & 1.83 & 0.117 & 2.21\\
$6j$ & 50 & 0.227 & 3.39 & 0.729 & 5.75\\
$6j$ & 500 & 4.07 & 19 & 42 & 94.2\\
\bottomrule\end{tabular}\endgroup

\end{table}

The references are exact rational factorial sums times square roots of rational numbers, evaluated at 512 and 768 bits and agreeing to 100 decimal digits. Against them the largest relative errors of the default route are $3.6\times10^{-14}$ for $3j$ and $4.8\times10^{-14}$ for $6j$.

\paragraph{Quantum deformation.}
The deformed $6j$ experiment extends to $j_{\max}=400$ (\cref{fig:targets}), evaluating the same 80 random admissible inputs per cutoff at real $q=0.999$, at the principal level $k=4 j_{\max}+1$, and at complex $q=0.999 e^{i\pi/8}$. These values of $|q|$ keep the sampled coefficients within floating-point range. Together with the further real deformations of \cref{tab:quantumcost}, all 2720 outputs are nonzero and meet the error target, with largest error $3.5\times10^{-14}$. The references are direct Racah sums at 1024 and 1536 bits agreeing to 80 decimal digits, with complex prefactors on the branch of \cref{sec:complex}.
At the largest cutoff the level median is $0.75$~\textmu s and the complex median $241$~\textmu s.
The complex route is slower because, at a small rational multiple of $\pi$, many balanced factors of $[n]$ lie close to the branch cut of their square roots and need the double-word pass.

The companion \code{q3j} experiment uses the same deformations and reference precisions, with independent weighted factorial sums. All 1920 outputs through $j_{\max}=400$ meet the error target, with largest error $1.5\times10^{-14}$. The function \code{qcg} shares this kernel and differs from \code{q3j} by the factor in \eqref{eq:threejcg}, so the experiment covers both functions, and the convention is checked independently by a spin-half coproduct identity. At $j_{\max}=400$ about half of the level inputs have $\kappa>10^3$, and the compensated first stage of \cref{sec:complex} keeps the level median at $0.91$~\textmu s.

\Cref{tab:quantumcost} compares with QWignerSymbols.jl~0.1.0 at positive real $q$. That package receives $q^2$ to match conventions, and each reference uses the parameter that its route stores. At $j_{\max}=20$ both packages meet the target on every input, and at $q=0.95$ QRecoupling's median is about 380 times faster. At $j_{\max}=50$ QWignerSymbols meets the target on only a fraction of the inputs, so no timing is reported for it there.

\begin{table}[!htbp]
\centering\small
\caption{Median time per call (\textmu s) for real deformation. QR denotes QRecoupling and QWS denotes QWignerSymbols. Pass counts use relative error at most $2^{-40}$, and a dash means that some outputs failed this check.}
\label{tab:quantumcost}
\begingroup\setlength{\tabcolsep}{5pt}
\begin{tabular}{@{}r r r r r@{}}\toprule
$q$ & $j_{\max}$ & QR & QWS & QWS passes\\\midrule
0.8 & 20 & 0.326 & 124 & 80/80\\
0.8 & 50 & 0.484 & \textemdash & 1/80\\
\addlinespace[2pt]
0.95 & 20 & 0.327 & 124 & 80/80\\
0.95 & 50 & 0.458 & \textemdash & 62/80\\
\addlinespace[2pt]
0.999 & 20 & 0.33 & 126 & 80/80\\
0.999 & 50 & 0.497 & \textemdash & 78/80\\
\bottomrule\end{tabular}\endgroup

\end{table}

\subsection{Matrices, batches and threads}
The function \code{fmatrix} computes each column of an $F$-matrix by the recurrence \eqref{eq:recurrence}. A column is propagated from both ends of its admissible range, matched where the two runs meet, and orthogonality fixes the normalization. Each step costs a few multiplications, whereas a scalar entry costs a sum, or a recurrence path of its own, whose length grows with the spins. The columns also share most of the recurrence coefficients, because only one factor depends on the column label. At $n=401$ channels an entry costs $0.0248$~\textmu s classically and $0.025$~\textmu s at a level, between 96 and 127 times less than in a loop of scalar \code{fsymbol} calls, even when that loop reuses an \code{EvaluationWorkspace} (\cref{tab:reuse}). The recurrence therefore saves far more than reusing scratch storage, although small matrices have less work over which to spread the setup cost.

The comparison covers classical and principal-level targets with outer spins $(s,s,s,s)$ and $(s,s+\tfrac12,s+1,s+\tfrac32)$ through $s=200$, at level $k=4s+1$, which gives $n=2s+1$ channels for both shapes. Each route is timed by the median of 9 complete calls, including allocation of the output. All entries agree with both scalar loops, and 708 entries across 32 matrices, including edges and centres, were checked against independent factorial references evaluated at 1024 and 1536 bits. The largest entry error is $5.5\times10^{-17}$, and the largest orthogonality defect $\max|F^{\mathsf T}F-I|$ is $1.1\times10^{-15}$.

For independent coefficients, a batch shares tables and runs on threads. With $10^{6}$ admissible inputs at $k=400$ and spins at most 100, the cost falls from $363.7$ to $76.6$~ns per coefficient on 6 threads, a factor of 4.75. This gain comes from parallel execution, since on one thread a batch of this size costs slightly more than the scalar loop. Each thread count uses a fresh process, unpinned threads and the median of 9 warmed calls. 

\begin{qrwidetable}
\centering\small
\caption{Left: median cost per entry (\textmu s) of complete equal-spin $F$-matrices with $n=2s+1$ channels. ``Scalar'' evaluates every \code{fsymbol} independently and ``Scalar + work'' reuses a workspace, and every route allocates its output matrix. Right: median ns per coefficient for batches of $10^{6}$ classical or level-$400$ symbols, with the level gain relative to the one-thread batch.}
\label{tab:reuse}
\begingroup\setlength{\tabcolsep}{5pt}\begin{tabular}[t]{@{}r r r r@{}}\toprule
$n$ & \code{fmatrix} & Scalar & Scalar + work\\\midrule
\multicolumn{4}{@{}l}{Classical}\\
21 & 0.03 & 0.121 & 0.12 \\
101 & 0.0244 & 0.418 & 0.429 \\
401 & 0.0248 & 2.92 & 2.39 \\
\addlinespace[3pt]
\multicolumn{4}{@{}l}{Level $k=4s+1$}\\
21 & 0.0341 & 0.138 & 0.139 \\
101 & 0.0265 & 0.412 & 0.388 \\
401 & 0.025 & 3.71 & 3.17 \\
\bottomrule\end{tabular}\endgroup\qq
\begin{tabular}[t]{@{}r r r r@{}}\toprule
Threads & Classical & Level & Level gain\\\midrule
1 & 419.5 & 363.7 & 1.00\\
2 & 215.3 & 184.4 & 1.97\\
4 & 117.4 & 103.8 & 3.50\\
6 & 88.7 & 76.6 & 4.75\\
\bottomrule\end{tabular}

\end{qrwidetable}

\subsection{Exact workloads}
\paragraph{Exact values.} A single \code{Exact(k)} value of a $6j$ symbol takes a median of $17.5$~\textmu s at $k=10$ with all spins equal to $1$, and $98.6$~\textmu s and $711$~\textmu s at $k=100$ and $k=200$ with all spins equal to $4$, over 21 warmed calls. At fixed spins the cost follows the degree $\varphi(2h)/2$ of the number field rather than the level itself. Outside the timing, independent factorial sums in a cyclotomic field verify 46 such values exactly.

\paragraph{Zero decisions.} The zero query, the exact output and the numerical output agree on every one of the 40754 ordered admissible $6j$ inputs at levels $1$--$10$. Independent factorial sums in a cyclotomic field classify these inputs into 293 zeros and 40461 nonzeros, with selection-rule zeros excluded. The query (\code{iszero_at}) settles 245 of the zeros by structural identities and confirms the other 48 exactly, and \cref{tab:zerodecisions} gives the cost of each path.

Further checks cover classical and weighted coupling coefficients and nearby levels, with both zeros and nonzeros, and exact confirmation rejects all 6 nonzero sums constructed to vanish modulo the screening prime. Nonzero values survive down to $5.7\times10^{-306}$ in \texttt{Float64}. A value near $2.5\times10^{-407}$ underflows in that type, while the exact query and a 512-bit evaluation both report it as nonzero.

\begin{table}[!htbp]
\centering\small
\caption{Zero-decision costs (\textmu s): median across $n$ inputs, selected with a fixed seed within each class, of 21 warmed samples of 20 calls each. The columns are the query \code{iszero_at}, construction of the \code{Exact(k)} output followed by \code{iszero}, and evaluation of the numerical coefficient.}
\label{tab:zerodecisions}
\begin{tabularx}{\columnwidth}{@{}X r r r r@{}}\toprule
Class & $n$ & Query & Exact & Numeric\\\midrule
Structural zero & 64 & 0.429 & 10.7 & 0.319\\
Other zero & 48 & 0.747 & 11.6 & 0.432\\
Nonzero & 64 & 0.231 & 19.6 & 0.106\\
\bottomrule\end{tabularx}

\end{table}

\paragraph{Setup.} Loading the package takes a median of $0.706$~s over 50 fresh processes. Ten first-call operations, each measured in 5 separate processes after loading, take between $0.594$ and $6.67$~s including compilation and setup (installation and precompilation excluded). Once compiled, a new level as large as $k=10000$ needs $11$~ms to build its tables, and the tables occupy $1.2$~MiB. 

\section{Discussion and outlook}\label{sec:limits}

QRecoupling connects the algebraic structure of finite factorial sums with their evaluation. A coefficient is stored as a factorial rule, and its cyclotomic exponents show which factors cancel within a product and which terms contribute at a unitary level. The cancellation that remains between terms is handled by adaptive arithmetic, and exact tests separate an algebraic zero from a small nonzero value. For coupling and recoupling coefficients, recurrences give a second representation that avoids this cancellation and computes related coefficients together. Numerical, exact and symbolic results come from one interface, and the outputs available depend on the coefficient and on the deformation.

These local $U_q(\mathfrak{sl}_2)$ data serve calculations in quantum geometry, topology and quantum many-body physics, where coefficients are rarely used one at a time. Accurate individual coefficients do not by themselves guarantee an accurate observable. The numerical error estimates are conditional, and further cancellation can occur when coefficients enter a larger sum. A floating-point zero can also come from underflow, so a claim of algebraic vanishing needs an exact decision. Exact arithmetic removes both ambiguities where it is supported (\cref{tab:targets}), at a cost that grows with integer size, field degree and expression complexity. 

The nearest extension is to higher $3nj$ symbols. The $9j$ and $15j$ symbols can be written as sums of products of $6j$ symbols, so every route described here applies to their terms, and the new task is to control the accuracy of the outer sum. Higher-rank quantum groups would require new techniques, because their recoupling coefficients carry multiplicity labels and have no general formula of the kind used here. 

The practical next step is to extend the error estimates and exact checks to complete state sums and tensor-network calculations, as used in the contraction~\cite{asante_tn} and coarse graining~\cite{asante_rg,dittrich} of spin foam models. Small instances evaluated exactly can serve as references for numerical contractions, for tests of recoupling identities and for estimates of truncation error. In larger systems, families of coefficients can be computed together, with precision raised only where it is needed. A complementary application is the study of geometric observables across scales of spin and deformation. Reliable values near zeros and in strongly cancelling regimes support comparison with semiclassical approximations, and they separate physical suppression from numerical loss of significance.

\section*{Acknowledgements}
The author thanks Hal Haggard for insightful discussions. The author acknowledges support from the Atlantic Association for Research in the Mathematical Sciences (AARMS) and the Natural Sciences and Engineering Research Council of Canada (NSERC).

\section*{Software availability}
QRecoupling.jl is distributed under the MIT license and requires Julia 1.10 or later. Source and issue tracking are at \url{https://github.com/sethkasante/QRecoupling.jl}. Documentation is at \url{https://sethkasante.github.io/QRecoupling.jl/}. This article describes version 0.5.1, archived at \href{https://doi.org/10.5281/zenodo.23268481}{doi:10.5281/zenodo.23268481}, and the measurements were made on that version. 

\section*{Generative AI assistance}
Claude Code and OpenAI Codex assisted in software prototyping, testing and documentation review, and in editing parts of the manuscript. The author reviewed and edited the resulting code and text and takes responsibility for the software and article.

\appendix
\section{Error estimate for the plain pass}\label{app:error}
For the normalized sum $S_{\rm norm}=\sum_{j=0}^N\tau_j$, with $\tau_0=1$ and $\tau_{j+1}=r_j\tau_j$, write $\widehat\tau_j$ and $\widehat s_j$ for the computed terms and partial sums, initialized exactly at $1$. Then $\widehat S_{\rm norm}=\widehat s_N$. Assume at most $c$ relative error factors $1+\delta$, $|\delta|\le u$, per step used to form a term, including the ratio calculation. Here $u$ is unit roundoff. Assume no overflow or underflow, exact rescaling, and $cNu<1/2$. With $\gamma_m=mu/(1-mu)$, the absolute error satisfies
\begin{equation}\label{eq:bound}
\begin{split}
 |\widehat S_{\rm norm}-S_{\rm norm}|
 \le{}&\frac{cu}{(1-\gamma_{cN})^2}
          \sum_{j=1}^{N}j|\widehat\tau_j|\\
 &+\frac{u}{1-u}\sum_{j=1}^{N}|\widehat s_j|.
\end{split}
\end{equation}
The first sum accounts for forming the terms, and the second accounts for adding them. Both are accumulated during the pass.

For the derivation, standard product-error estimates~\cite{higham} give $|\widehat\tau_j-\tau_j|\le\gamma_{cj}|\tau_j|\le cju|\widehat\tau_j|/(1-\gamma_{cN})^2$. Each addition satisfies $\widehat s_j=(\widehat s_{j-1}+\widehat\tau_j)(1+\epsilon_j)$, $|\epsilon_j|\le u$, so
\[
 |\widehat s_j-\widehat s_{j-1}-\widehat\tau_j| \le\frac{u}{1-u}|\widehat s_j|.
\]
Summing these addition residuals telescopes. Adding the term errors proves \eqref{eq:bound}.

To estimate the error in the full coefficient, scale \eqref{eq:bound} by the absolute segment factor and include errors in the prefactor, first term, combination of segments and final rounding. The resulting computed estimate is $B$ in \cref{sec:escalation}. If $B$ bounds the absolute error and $B<|\widehat{\mathcal W}|$, the reverse triangle inequality gives
\[  \varepsilon_{\rm rel}\le\frac{B}{|\widehat{\mathcal W}|-B}. \]
Thus the acceptance test $B\le\eta|\widehat{\mathcal W}|$, $\eta<1$, implies $\varepsilon_{\rm rel}\le\eta/(1-\eta)$, with the relative error defined by \eqref{eq:relativeerror}.

This model covers the plain pass under the stated assumptions. It does not certify transcendental table values, rounding in the computation of $B$, or the separate fallback routes.

\end{document}